\newcommand{\bbR}{\mathbb{R}}      
\newcommand{\bbN}{\mathbb{N}}      
\newcommand{\grad}{\operatorname{grad}}
\newcommand{\Real}{\operatorname{Re}}
\newcommand{\lb}{\label}
\newcommand{\be}{\begin{equation}}
\newcommand{\ee}{\end{equation}}
\newcommand{\ben}{\begin{eqnarray*}}
	\newcommand{\een}{\end{eqnarray*}}
\newcommand{\bea}{\begin{eqnarray}}
\newcommand{\eea}{\end{eqnarray}}
\newcounter{mnotecount}[section]
\renewcommand{\themnotecount}{\thesection.\arabic{mnotecount}}
\newcommand{\mnote}[1]
{\protect{\stepcounter{mnotecount}}$^{\mbox{\footnotesize $%
\!\!\!\!\!\!\,\bullet$\themnotecount}}$ \marginpar{
\raggedright\tiny\em $\!\!\!\!\!\!\,\bullet$\themnotecount: #1} }
\newtheorem{Thm}{Theorem}[section]
\begin{document}
\title{Linear relativistic thermoelastic rod}
\author{Anne T.\ Franzen and Jos\'e Nat\'ario\\
{\small Center for Mathematical Analysis, Geometry and Dynamical Systems,} \\
{\small Mathematics Department, Instituto Superior T\'ecnico,} \\
{\small Universidade de Lisboa, Portugal}}
\maketitle
\begin{abstract}
We derive and analyze the linearized hyperbolic equations describing a relativistic heat-conducting elastic rod. We construct a decreasing energy integral for these equations, compute the associated characteristic propagation speeds, and prove that the solutions decay in time by using a Fourier decomposition. For comparison purposes, we obtain analogous results for the classical system with heat waves, in which the finite propagation speed of heat is kept but the other relativistic terms are neglected, and also for the usual classical system.
\end{abstract}
\tableofcontents
%
%
%
%
%
\section{Introduction}\label{section0}
The motions of a relativistic rigid elastic rod, that is, a rod whose speed of sound is as large as possible (namely the speed of light), were studied in \cite{CN19}. Perhaps unsurprisingly, these motions were shown to be periodic, as might be expected from the fact that a purely elastic rod conserves energy. In the present paper we study a relativistic elastic rod settling to equilibrium by introducing energy dissipation and heat conduction. To avoid the ambiguity and complexity of the full nonlinear system, we derive a linear approximation of the equations of motion by linearizing the thermodynamic relations and the energy-momentum tensor around the equilibrium state, thus obtaining a linear hyperbolic system. Although it may be argued that the effects of the nonlinear terms that we discard are possibly more important than the effects of the relativistic terms that we keep, our main purpose here is precisely to understand the role played by these relativistic terms. For comparison purposes, we also discuss the classical system, as well as the classical system with heat waves, in which the finite propagation speed of heat is kept but the other relativistic terms are neglected.

Linear thermoelastic rods have been previously considered in the literature, both for the classical system and the classical system with heat waves. The equations for the classical system can be found in \cite{BW60, Carlson72}, and were thoroughly analyzed in \cite{Day85, Hansen92} (a general result for the relaxation of linear thermoelastic systems into equilibrium was also proved in \cite{Dafermos68}). The equations for the classical system with heat waves were introduced in \cite{LS67}, following the proposal by Cattaneo to use the telegrapher's equation as a model of heat conduction \cite{Cattaneo48}, and were studied in detail in \cite{Racke02}. Various versions of the relativistic thermoelasticity equations were proposed in \cite{Barrabes75, Kranys77, Palumbo85}, but the problem of the linear relativistic thermoelastic rod has not been addressed so far.

The organization of the paper is as follows. In Section~\ref{section1} we linearize the thermodynamic relations of the rod around its equilibrium state, expressing the coefficients in terms of the equilibrium temperature, number density, speed of sound and an adimensional parameter $\gamma$ which controls the coupling between heat and vibration. In Section~\ref{section2} we linearize the energy-momentum conservation laws and complement them with the linearization of the relativistic version of Cattaneo's heat conduction law, thus obtaining a linear hyperbolic system. In Section~\ref{section3} we derive an energy integral which is decreasing for solutions of the relativistic system. This integral is seen to reduce to the known expressions of the decreasing energy integrals for both the classical system \cite{Day85} and the classical system with heat waves \cite{Racke02}. In Section~\ref{section4} we compute the characteristic propagation speeds for both the relativistic system and the classical system with heat waves. We identify two different propagation speeds, which can be associated to mechanical waves and heat waves; interestingly, the speed of the mechanical waves is not exactly the (adiabatic) speed of sound, except at zero temperature. In Section~\ref{section5} we prove that the solutions decay in time by using a Fourier decomposition, and obtain the analogue results for the classical system with heat waves in Section~\ref{section6}, and for the classical system in Section~\ref{section7}. To allow for the comparison of these results we expand the eigenvalues about zero equilibrium temperature, instead of zero coupling parameter $\gamma$, as is usually done.
%
%
\section{Linear relativistic thermodynamics}\label{section1}
According to the principles of thermodynamics (see for instance \cite{Christodoulou95}), the rest energy density $\rho$ of the rod is a function of the number density $n$ and the entropy per particle $s$,
\begin{equation} \label{fundamental}
\rho = \rho(n,s).
\end{equation}
The pressure $p$ and the temperature $\theta$ are then obtained from
\begin{equation}
\lb{pressure1}
p = n \frac{\partial \rho}{\partial n} - \rho
\end{equation}
and
\begin{equation}
\lb{theta}
\theta = \frac1n \frac{\partial \rho}{\partial s},
\end{equation}
and the speed of sound $c$ is given by
\begin{equation}
\lb{c_square}
c^2 = \left( \frac{\partial p}{\partial \rho} \right)_s = n\frac{\partial^2 \rho}{\partial n^2} \left(\frac{\partial \rho}{\partial n}\right)^{-1}.
\end{equation}
To obtain linear equations we expand the fundamental thermodynamic equation to second order around an equilibrium state $(n_0,s_0)$. Setting
\begin{equation}
\hat{n} = n - n_0, \qquad \hat{s} = s - s_0,
\end{equation}
and choosing the mass/energy units so that the rest energy density of the rod in the equilibrium state is $\rho_0=1$, we have 
\begin{equation} \label{fundamental2}
\rho(\hat{n},\hat{s}) = 1 + \alpha \hat{n} + \beta \hat{s} + \frac12 \delta {\hat{n}}^2 + \varepsilon \hat{n} \hat{s} + \frac12 \varphi {\hat{s}}^2,
\end{equation}
where $\alpha$, $\beta$, $\delta$, $\varepsilon$ and $\varphi$ are constants to be specified shortly.
From \eqref{pressure1} together with  \eqref{fundamental2} we obtain, to linear order,
\bea
\lb{pressure2}
p(\hat{n}, \hat{s})&=&(n_0+\hat{n})(\alpha+\delta \hat{n}+\varepsilon \hat{s})-\rho\nonumber\\
&=&  n_0 \delta \hat{n}+(n_0 \varepsilon -\beta)\hat{s},
\eea
where we assumed that the pressure of the rod in the equilibrium state is zero:
\be
p_0 = n_0\alpha -1\,\,=\,\,0,
\lb{alpha}
\ee
implying
\be
\alpha=\frac1{n_0}.
\ee
Noting that, to zeroth order,
\bea
\frac{\partial \rho}{\partial \hat{n}}&=&\alpha,\\
\frac{\partial^2 \rho}{\partial \hat{n}^2}&=&\delta,
\eea
and recalling equation \eqref{c_square}, we obtain for the speed of sound in the equilibrium state
\be
c_0^2=\frac{n_0 \delta}{\alpha}\,\,=\,\,n_0^2 \delta,
\lb{delta}
\ee
whence
\be
\delta=\frac{c_0^2}{n_0^2}.
\ee
We further use \eqref{theta} to derive, to linear order,
\bea
\theta&=&\frac1{n_0+\hat{n}}(\beta+\varepsilon\hat{n}+\varphi\hat{s})\nonumber\\
&=&\left( \frac1{n_0}-\frac{\hat{n}}{n_0^2}\right) (\beta+\varepsilon\hat{n}+\varphi\hat{s})\nonumber\\
\lb{thetalinear}
&=&\frac{\beta}{n_0}+\left(\frac{\varepsilon}{n_0}-\frac{\beta}{n_0^2} \right) \hat{n}+\frac{\varphi}{n_0}\hat{s}.
\lb{beta}
\eea
Therefore,
\be
\beta=n_0\theta_0,
\ee
where $\theta_0$ is the temperature of the rod in the equilibrium state. Solving \eqref{thetalinear} for $\hat{s}$, we obtain
\bea
\lb{diff_entropy}
\hat{s}=\frac{n_0}{\varphi}\left[ \hat{\theta}+ \left(\frac{\theta_0-\varepsilon}{n_0}\right)\hat{n} \right] ,
\eea
where
\begin{equation}
\hat{\theta} = \theta - \theta_0.
\end{equation}
Starting from the linearized version of \eqref{fundamental2},
\bea
\lb{lin_fun}
\rho(\hat{n}, \hat{s})=1+\alpha\hat{n}+\beta\hat{s},
\eea
and using \eqref{alpha}, \eqref{beta} and \eqref{diff_entropy}, we can derive the energy density $\rho$ in terms of $\hat{n}$ and $\hat{\theta}$ as
\bea
\lb{rho_almost_final}
\rho(\hat{n}, \hat{\theta})&=&1 + \frac{\hat{n}}{n_0}+\frac{n_0^2 \theta_0}{\varphi}\left[  \hat{\theta}+\left(\frac{\theta_0-\varepsilon}{n_0}\right)\hat{n} \right] ,\nonumber\\
&=&1 + \hat{\theta}+ \left(\frac{1+\theta_0-\varepsilon}{n_0} \right) \hat{n},
\eea
where in the second step we have set (by choosing the temperature units conveniently)
\be
\frac{n_0^2 \theta_0}{\varphi}=1,
\ee
that is,
\be
\varphi=n_0^2\theta_0. \lb{epsilon}
\ee
Analogously, we can derive from the linearized equation \eqref{pressure2}, with the help of \eqref{delta}, \eqref{beta}, \eqref{diff_entropy} and \eqref{epsilon}, the formula
\bea
\lb{pressure_final}
p(\hat{n}, \hat{\theta})&=&\left(\frac{c_0^2-\gamma^2\theta_0}{n_0} \right) \hat{n}+\gamma\,\hat{\theta}
\eea
for the pressure, where 
\bea
\lb{gamma}
\gamma=\frac{\varepsilon-\theta_0}{\theta_0}.
\eea
Since we expect that increasing the temperature for a fixed particle number will increase the pressure, we impose the condition
\bea
\lb{con1}
\left( \frac{\partial p}{\partial \theta}\right)_n>0 \quad \Rightarrow\quad  \gamma >0.
\eea
Similarly, we also expect that increasing the particle number for a fixed temperature leads to higher pressure. Therefore, we assume that
\bea
\lb{con2}
\left( \frac{\partial p}{\partial n}\right)_{\theta}>0 \quad \Rightarrow\quad   c_0^2>\gamma^2\theta_0.
\eea
Using \eqref{gamma}, we can write equation \eqref{rho_almost_final} as
\bea
\lb{rho_final}
\rho(\hat{n}, \hat{\theta})=1 + \hat{\theta}+ \left(\frac{1-\gamma \theta_0}{n_0} \right) \hat{n},
\eea
Note that, as advertised, the constants $\alpha, \beta, \delta, \varepsilon, \varphi$ have been fixed as functions of $n_0, \theta_0, c_0^2$ and the adimensional parameter $\gamma$ by equations \eqref{alpha}, \eqref{delta}, \eqref{beta}, \eqref{epsilon} and \eqref{gamma}.
%
%
\section{Linearized equations of motion}\label{section2}
In the previous section we derived the final forms \eqref{pressure_final} and \eqref{rho_final} for the pressure and energy density as linearized functions of the number density and the temperature. In this section, we initiate the study of a solid rod moving along the $x$-axis of Minkowski's spacetime. To do that, we introduce a function $\lambda=\lambda(t,x)$ which gives the number of particles of the rod counted from a given endpoint. This function is constant along the worldlines of the rod's particles, and therefore serves as a measurement of its deformation; since its gradient is tangent to the rod's particles rest spaces, we have 
\bea
n=|\grad\lambda|=|d\lambda|.
\eea
Assuming that the initial endpoint sits at $x=0$ for the rod in equilibrium, we have the linearization
\be
\lb{lambda}
\lambda(x,t)=n_0x+ \hat{\lambda}(x, t),
\ee
and so
\be
d \lambda=n_0d x +d \hat{\lambda}=\left(n_0+\hat{\lambda}'\right)d x+\dot{ \hat{\lambda}}d t,
\ee
where the prime denotes differentiation with respect to $x$ and the dot differentiation with respect to $t$. Therefore, to linear order
\bea
|d \lambda|&=&\left(n_0^2+2n_0\hat{\lambda}'\right)^{\frac12}=n_0+\hat{\lambda}',
\eea
so that the deviation of the particle number from equilibrium is 
\bea
\lb{numberlambda}
\hat{n}&=&\hat{\lambda}'.
\eea
Since the four-velocity of the rod's particles must be orthogonal to $d \lambda$, the corresponding covector must be, to linear order,
\bea
\lb{nbrvelocity}
\tilde{ u}=\frac{1}{|d \lambda|}\left[-(n_0+\hat{\lambda}')d t-\dot{\hat{\lambda}} d x\right]=-d t-\frac{\dot{\hat{\lambda}}}{n_0}d x
\eea
(where the minus signs are needed  to make it future-pointing). The linearized four-velocity is then
\bea
\lb{velocity}
\vec{u}=\frac{\partial}{\partial t}-\frac{\dot{\hat{\lambda}}}{n_0}\frac{\partial}{\partial x},
\eea
and the linearized four-acceleration
\bea
\lb{acceleration}
\dot{\vec u}=\nabla_{\vec u} {\vec u}=-\frac{\ddot{\hat{\lambda}}}{n_0}\frac{\partial}{\partial x}.
\eea
Note that particle number conservation to linear order is automatic, since
\bea
\lb{part_n}
\mbox{div} (n\vec{u})&=&\mbox{div}\left[ (n_0+\hat{\lambda}')\left(\frac{\partial}{\partial t}-\frac{\dot{\hat{\lambda}}}{n_0}\frac{\partial}{\partial x} \right) \right] \nonumber \\
&=& \mbox{div}\left[ (n_0+\hat{\lambda}')\frac{\partial}{\partial t}-\dot{\hat{\lambda}}\frac{\partial}{\partial x} \right] \nonumber \\
&=& \dot{\hat{\lambda}}' - \dot{\hat{\lambda}}' = 0.
\eea
The relativistic version of Cattaneo's heat conduction law, obtained by Israel and Stewart in \cite{IS79} (see also \cite{Hayward99}), gives the heat flow vector field as
\bea
\lb{heat_flow}
\vec{q}=-\kappa\left(\nabla \theta+\theta { \dot{\vec u}}+bn\theta^2{ \dot{\vec q}}\right)^{\perp}
\eea
(projected orthogonally to the velocity), where $\kappa$ is the thermal conductivity, $\nabla \theta$ is the temperature gradient and $b$ is a constant. Linearizing with respect to the equilibrium state reduces this quantity to a scalar,
\bea
\vec{q}=q \frac{\partial}{\partial x},
\eea
with
\bea
q=-\kappa\left(\hat{\theta}'-\frac{\theta_0}{n_0} \ddot{\hat{\lambda}}+b n_0\theta_0^2 \dot{q}\right).
\eea
We rewrite this equation as 
\bea
\lb{heat_speed_eq}
\frac{\kappa}{c_h^2} \dot{q}+q+\kappa\left(\hat{\theta}'-\frac{\theta_0}{n_0} \ddot{\hat{\lambda}}\right)=0, 
\eea
where, as we shall see, the new constant
\bea
\lb{speed_heat}
c_h= \frac{1}{\theta_0 \sqrt{b n_0}}
\eea
is related to the speed of heat waves.

To linear order, the energy-momentum tensor is then given by
\bea
\lb{energymt}
T^{\mu \nu}&=&(\rho +p)u^{\mu}u^{\nu}+pg^{\mu \nu}+u^{\mu}q^{\nu}+u^{\nu}q^{\mu},\\
&=&\begin{bmatrix}
\rho      & -(\rho+p)\dot{\hat{\lambda}}/n_0+q  \\
-(\rho+p)\dot{\hat{\lambda}}/n_0+q      & p \\
\end{bmatrix},\\
&=&\begin{bmatrix}
1+\hat{\theta}+(1-\gamma\theta_0)\hat{\lambda}' /n_0    & -\dot{\hat{\lambda}}/n_0+q  \\
	-\dot{\hat{\lambda}}/n_0+q      & (c_0^2-\gamma^2\theta_0)\hat{\lambda}'/n_0+\gamma\hat{\theta} 
\end{bmatrix},
\eea
where we have used \eqref{pressure_final}, \eqref{rho_final}, \eqref{numberlambda} and \eqref{velocity}.
We can now obtain the equations of motion of the rod from energy-momentum conservation,
\bea
\nabla_{\mu} T^{\mu \nu}=0,
\eea
leading to the following system of equations:
\bea
\lb{disp_system1}
&&\dot{\hat{\theta}}+(1-\gamma\theta_0)\dot{\hat{\lambda}}'/n_0-\dot{\hat{\lambda}}'/n_0+q' = 0,\\
\lb{disp_system2}
&&-\ddot{\hat{\lambda}}/n_0+\dot{q}+(c_0^2-\gamma^2\theta_0)\hat{\lambda}''/n_0+\gamma\hat{\theta}'=0.
\eea
These equations are to be taken together with \eqref{heat_speed_eq}. 

The equilibrium position of the particle at event $(t,x)$ (i.e.\ the position coordinate in the Eulerian description of the motion) is simply
\bea
\xi(t,x) = \frac{\lambda(t,x)}{n_0} = x + \frac{\hat{\lambda}(t,x)}{n_0}.
\eea
Therefore, the quantity
\be \label{wdisplacement}
w(x,t)=-\frac{\hat{\lambda}(x,t)}{n_0}
\ee
represents, to linear order, the displacement of a given particle with respect to its equilibrium position. Using this variable, equations \eqref{heat_speed_eq}, \eqref{disp_system1} and \eqref{disp_system2} then read
\bea
\lb{displacement1}
&&\dot{\hat{\theta}}+\gamma \theta_0\dot{w}'+q'=0,\\
\lb{displacement2}
&&\ddot{w}-c_s^2w''+\gamma\hat{\theta}'+\dot{q}=0,\\
\lb{displacement3}
&&\frac{\kappa}{c_h^2}\dot{q}+q+\kappa(\hat{\theta}'+\theta_0\ddot{w})=0,
\eea
where we have set
\bea \label{cssquared}
c_s^2 = c_0^2-\gamma^2\theta_0.
\eea
Notice that \eqref{con2} guarantees that $c_s^2$ is indeed positive; we will see later that this constant is related to the speed of mechanical waves. Using \eqref{wdisplacement} and \eqref{cssquared} we can rewrite equation~\eqref{pressure_final} as
\bea
\lb{pressure}
p=\gamma \hat{\theta}-c_s^2w'.
\eea
Since the Eulerian position coordinate $\xi$ differs from the Lagrangian position coordinate $x$ by a term of first order, and since all variables in equations~\eqref{displacement1}-\eqref{displacement3} above are also of first order, we can take $x$ to be either the Lagrangian or the Eulerian position coordinate in these equations. Since the boundary conditions are much simpler for the latter, we make that choice from this point on.

In comparison with the classical equations for a thermoelastic rod,
\bea
\lb{classical1}
&&\dot{\hat{\theta}}+\gamma \theta_0\dot{w}'+q'=0,\\
\lb{classical2}
&&\ddot{w}-c_s^2w''+\gamma\hat{\theta}'=0,\\
\lb{classical3}
&&q+\kappa\hat{\theta}'=0
\eea 
(see for instance \cite[\S 1]{Hansen92}), equations~\eqref{displacement1}-\eqref{displacement3} contain three additional terms: the term $\dot{q}$ in equation~\eqref{displacement2}, which represents the force due to variations in the momentum of heat; the term $\kappa\theta_0\ddot{w}$ in equation \eqref{displacement3}, which can be understood as a consequence of the redshift effect (see for instance \cite{SV19}); and the Cattaneo term $\kappa\dot{q}/c_h^2$ in equation \eqref{displacement3}, which allows for the fact that the heat propagates with finite speed. It is the role played by these relativistic terms that we wish to understand. Neglecting the first two terms while keeping the third corresponds to adding {\it heat waves} to the classical model:
\bea
\lb{heatwaves1}
&&\dot{\hat{\theta}}+\gamma \theta_0\dot{w}'+q'=0,\\
\lb{heatwaves2}
&&\ddot{w}-c_s^2w''+\gamma\hat{\theta}'=0,\\
\lb{heatwaves3}
&&\frac{\kappa}{c_h^2}\dot{q}+q+\kappa\hat{\theta}'=0.
\eea
This model has been considered before in \cite[\S 3]{LS67}. As noted in \cite{JP89}, equation \eqref{heatwaves3} is equivalent (assuming enough regularity and boundedness of $q$ and $\hat\theta'$) to 
\bea
q(t,x)=- c_h^2 \int_{-\infty}^t \hat{\theta}'(s,x) e^{\frac{c_h^2}{\kappa}(s-t)} ds,
\eea
that is, it corresponds to a version of the Fourier law \eqref{classical3} averaged over the past with an exponentially decreasing weight. Note also that
\bea
\lim_{c_h \to + \infty} c_h^2 \int_{-\infty}^t \hat{\theta}'(s,x) e^{\frac{c_h^2}{\kappa}(s-t)} ds = \kappa \hat{\theta}'(t,x),
\eea
as might be expected: The classical model is obtained by assuming that the speed of heat $c_h$ is infinite.
%
%
\section{Decreasing energy integral}\label{section3}
We choose as boundary conditions either
\bea
\lb{bd_con}
w(0)=w(L)=q(0)=q(L)=0,
\eea
or
\bea
\lb{bd_con2}
p(0)=p(L)=q(0)=q(L)=0,
\eea
where $L$ is the length of the rod and the pressure $p$ is given by equation \eqref{pressure}. These correspond to assuming that the rod is insulated, so that no heat flows into it, and that the endpoints are either clamped or completely free to move. 

\begin{Thm}
If we take as boundary conditions either \eqref{bd_con} or \eqref{bd_con2} then the relativistic system \eqref{displacement1}--\eqref{displacement3} admits the decreasing energy integral\footnote{This should not be confused with the actual energy integral, $$E = \int_{0}^L T^{00}(1+w')dx = \int_{0}^L (1 + \hat{\theta}+\gamma \theta_0 w') dx,$$ where the calculation is to linear order and the factor $(1+w')$ is necessary because we are regarding $x$ as an Eulerian coordinate; this integral is conserved by \eqref{displacement1} and either \eqref{bd_con} or \eqref{bd_con2}, and implies the conservation of $E_1 = \int_{0}^L (\hat{\theta}+\gamma \theta_0 w') dx$ ($= \int_{0}^L \hat{\theta} dx$ if the boundary condition \eqref{bd_con} is used). It is easily seen from \eqref{diff_entropy}, \eqref{gamma} and \eqref{wdisplacement} that $E_1$ is also proportional to the linear perturbation in the total entropy, which we expect to be constant because there is no heat flow into the rod.} 
\bea
\lb{decr_energy}
\tilde{E}=\int\limits^L_0\left( \frac{\theta_0}{2} \dot{w}^2+\frac{\theta_0 c_s^2}{2} {w'}^2+\frac12 {\hat{\theta}}^2+\frac{1}{2c_h^2}q^2+\theta_0\dot{w}q\right)d x.  
\eea

\end{Thm}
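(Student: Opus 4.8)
The plan is to apply the standard energy method: differentiate $\tilde E$ in $t$, substitute the equations of motion \eqref{displacement1}--\eqref{displacement3}, and show that the result consists of a spatial boundary term, which vanishes under either set of boundary conditions, plus a manifestly non-positive bulk term. First I would differentiate under the integral sign,
\[
\dot{\tilde E} = \int_0^L\left(\theta_0\dot w\ddot w + \theta_0 c_s^2 w'\dot w' + \hat\theta\dot{\hat\theta} + \tfrac{1}{c_h^2}q\dot q + \theta_0\ddot w q + \theta_0\dot w\dot q\right)dx,
\]
and then eliminate $\ddot w$ using \eqref{displacement2}, $\dot{\hat\theta}$ using \eqref{displacement1}, and $\dot q$ (where it appears inside $\tfrac{1}{c_h^2}q\dot q$) using \eqref{displacement3}, leaving the terms $\theta_0 c_s^2 w'\dot w'$, $\theta_0\ddot w q$ and $\theta_0\dot w\dot q$ untouched.

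The cancellations to watch for are the following. Substituting $\ddot w = c_s^2 w'' - \gamma\hat\theta' - \dot q$ produces a term $-\theta_0\dot w\dot q$ that cancels the last term $\theta_0\dot w\dot q$ in the integrand; substituting $\dot q = c_h^2(-q/\kappa - \hat\theta' - \theta_0\ddot w)$ into $\tfrac{1}{c_h^2}q\dot q$ produces $-\theta_0 q\ddot w$, which cancels $\theta_0\ddot w q$, and also produces the dissipative term $-q^2/\kappa$ together with a term $-q\hat\theta'$; and substituting $\dot{\hat\theta} = -\gamma\theta_0\dot w' - q'$ gives $-\gamma\theta_0\hat\theta\dot w' - \hat\theta q'$. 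Collecting what survives, the non-dissipative part should be exactly
\[
\theta_0 c_s^2\big(w''\dot w + w'\dot w'\big) - \gamma\theta_0\big(\hat\theta'\dot w + \hat\theta\dot w'\big) - \big(\hat\theta' q + \hat\theta q'\big) = \big(\theta_0 c_s^2 w'\dot w - \gamma\theta_0\hat\theta\dot w - \hat\theta q\big)',
\]
so that $\dot{\tilde E} = \big[\theta_0 c_s^2 w'\dot w - \gamma\theta_0\hat\theta\dot w - \hat\theta q\big]_0^L - \frac1\kappa\int_0^L q^2\,dx$.

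It then remains to check that the boundary term vanishes. Under \eqref{bd_con}, $w$ is identically zero at $x=0,L$, hence so is $\dot w$, and $q$ vanishes there as well, so all three contributions drop. Under \eqref{bd_con2}, $q$ still vanishes at the endpoints, killing the last contribution, and $p = \gamma\hat\theta - c_s^2 w' = 0$ there by \eqref{pressure}; substituting $c_s^2 w' = \gamma\hat\theta$ shows that the first two contributions combine into $\theta_0\dot w(c_s^2 w' - \gamma\hat\theta) = 0$. In either case $\dot{\tilde E} = -\frac1\kappa\int_0^L q^2\,dx \le 0$, the thermal conductivity $\kappa$ being positive, which proves that $\tilde E$ is non-increasing along solutions.

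I do not expect a real obstacle in carrying this out: once the integrand of $\tilde E$ is written down, the computation is bookkeeping. The one point requiring insight — to discover rather than to verify — is the cross term $\theta_0\dot w q$ in $\tilde E$: its time derivative $\theta_0\ddot w q + \theta_0\dot w\dot q$ is exactly what is needed to absorb both the $\dot q$ force appearing in \eqref{displacement2} and the $\theta_0\ddot w$ redshift term appearing in \eqref{displacement3}, so that $\dot{\tilde E}$ collapses to a boundary term plus $-q^2/\kappa$; without this term the computation would not close. I would organize the write-up so that these cancellations are displayed explicitly, and then dispose of the two boundary cases separately.
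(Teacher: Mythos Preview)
Your proposal is correct and follows exactly the same approach as the paper: differentiate $\tilde E$, substitute the three equations of motion, and recognize the surviving terms as the total derivative $\big[\theta_0\dot w(c_s^2 w' - \gamma\hat\theta)\big]' - (\hat\theta q)'$ plus the dissipative term $-q^2/\kappa$. Your write-up is in fact more explicit than the paper's, which compresses the intermediate cancellations into a single line and does not spell out the verification of the two boundary cases; your grouping $\theta_0\dot w(c_s^2 w' - \gamma\hat\theta)$ is precisely what the paper uses to make both \eqref{bd_con} and \eqref{bd_con2} transparent.
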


\begin{proof}
Taking the derivative with respect to $t$ yields
\bea
\lb{dt_energy}
\frac{d \tilde{E}}{d t}&=&\int\limits^L_0\left( \theta_0\dot{w}\ddot{w} + \theta_0 c_s^2{w'}\dot{w'} + {\hat{\theta}}\dot{\hat{\theta}}+\frac{1}{c_h^2}q\dot{q}+\theta_0\ddot{w}q+\theta_0\dot{w}\dot{q}\right)d x\nonumber \\
&=&   \int\limits^L_0\left(\left[ \theta_0 \dot{w}(c_s^2w' -\gamma \hat{\theta}) \right] '-(\hat{\theta} q)'-\frac{1}{\kappa} q^2\right)d x\nonumber \\
&=& -\frac{1}{\kappa}  \int\limits^L_0 q^2d x \leq0.
\eea
\end{proof}

Similar calculations show that the classical system with heat waves \eqref{heatwaves1}--\eqref{heatwaves3} admits the decreasing energy integral
\bea
\tilde{E}=\int\limits^L_0\left( \frac{\theta_0}{2} \dot{w}^2+\frac{\theta_0 c_s^2}{2} {w'}^2+\frac12 {\hat{\theta}}^2+\frac{1}{2c_h^2}q^2\right)d x
\eea
(already found in \cite{Racke02}), whereas the classical system \eqref{classical1}--\eqref{classical3} admits the decreasing energy integral
\bea
\tilde{E}=\int\limits^L_0\left( \frac{\theta_0}{2} \dot{w}^2+\frac{\theta_0 c_s^2}{2} {w'}^2+\frac12 {\hat{\theta}}^2\right)d x
\eea
(already found in \cite{Day85}).
%
%
\section{Characteristic propagation speeds}\label{section4}
By introducing the new variables $s=w'$ and $v=\dot{w}$, the relativistic system \eqref{displacement1}--\eqref{displacement3} can be cast in as a first order system:
\bea
\lb{cp_system1}
&& \dot{s}-v'=0,\\
&& \dot{v}-c_s^2s'+ \gamma \hat{\theta}'+\dot{q}=0,\\
&&\dot{\hat{\theta}}+q'+\gamma \theta_0 v'=0,\\
\lb{cp_system4}
&& \frac{\kappa}{c_h^2}\dot{q}+q+\kappa \hat{\theta}'+\kappa \theta_0 \dot{v}=0.
\eea
The system can be written in matrix form as
\bea
\begin{bmatrix}
1      & 0&0&0  \\ 
0   & 1&0&1 \\
0&0&1&0\\
0& \kappa \theta_0& 0& \frac{\kappa}{c_h^2}
\end{bmatrix}
\begin{bmatrix}
\dot{s}  \\ 
\dot{v} \\
\dot{\hat{\theta}}\\
\dot{q}
\end{bmatrix}
+
\begin{bmatrix}
0      & -1 & 0 & 0  \\ 
-c_s^2  & 0 & \gamma & 0 \\
0 & \gamma \theta_0 & 0 & 1\\
0 & 0 & \kappa& 0
\end{bmatrix}
\begin{bmatrix}
{s}'  \\ 
{v}' \\
{\hat{\theta}}'\\
{q}'
\end{bmatrix}
=  \begin{bmatrix}
0  \\ 
0 \\
0\\
-q
\end{bmatrix}.
\eea
As is well known \cite{Alinhac099}, the characteristic propagation speeds are given by the eigenvalues of the matrix
\bea
\mathbb{A}=
\begin{bmatrix}
1      & 0 & 0 &0  \\ 
0   & 1 & 0 & 1 \\
0 & 0 & 1 & 0\\
0& \kappa \theta_0& 0& \frac{\kappa}{c_h^2}
\end{bmatrix}^{-1}
\begin{bmatrix}
0      & -1 & 0 & 0  \\ 
-c_s^2  & 0 & \gamma & 0 \\
0 & \gamma \theta_0 & 0 & 1\\
0 & 0 & \kappa& 0
\end{bmatrix},
\eea
which can be found by solving
\begin{align}
\lb{un_rel_eigen}
& \det\left(\mathbb{A}-\lambda\cdot\mathbb{I}\right) = 0 \Leftrightarrow \nonumber \\
& (1-c_h^2\theta_0)\lambda^4-\left(c_h^2+c_s^2+\gamma^2\theta_0-2\gamma c_h^2\theta_0 \right)\lambda^2+c_s^2c_h^2 = 0.
\end{align}
It is interesting to note that for $\theta_0=0$ the eigenvalues are given by
\bea
\lambda^2 = c_s^2 \qquad \text{ or } \qquad \lambda^2 = c_h^2,
\eea
corresponding, as one might expect, to the speed of sound and speed of heat along the rod. To obtain the first order correction\footnote{From equation \eqref{rho_final} it is clear that in our units temperature is given by the ratio between the thermal energy density and the rest density of the rod, so that one expects $\theta_0 \ll 1$. We expand the eigenvalues about zero equilibrium temperature $\theta_0$, instead of zero coupling parameter $\gamma$ (as is usually done), to facilitate the comparison with the classical system with heat waves, where $\gamma \ll 1$ would compete with $c_h^2, c_s^2 \ll 1$.} in $\theta_0$ we differentiate equation~\eqref{un_rel_eigen} at $\theta_0=0$,
\bea
-c_h^2 \lambda^4 + 2\lambda^2 \frac{\partial \lambda^2}{\partial \theta_0} - \left(\gamma^2-2\gamma c_h^2\right) \lambda^2 - \left(c_h^2+c_s^2\right)\frac{\partial \lambda^2}{\partial \theta_0} = 0,
\eea
obtaining 
\bea
&&\lambda^2=c_s^2 \qquad \Rightarrow \qquad \frac{\partial \lambda^2}{\partial \theta_0}=\frac{c_s^2(\gamma^2 - 2\gamma c_ h^2 + c_h^2c_s^2)}{c_s^2-c_h^2}; \\
&&\lambda^2=c_h^2 \qquad \Rightarrow \qquad \frac{\partial \lambda^2}{\partial \theta_0}=\frac{c_h^2(\gamma^2 - 2\gamma c_ h^2 + c_h^4)}{c_h^2-c_s^2}.
\eea
Therefore, we get for the eigenvalues
\bea
\lambda_s^2 &=& c_s^2\left[1+\frac{(\gamma^2 - 2\gamma c_h^2 + c_h^2c_s^2)\theta_0}{c_s^2-c_h^2}+\ldots\right]; \label{c_s_squared_relativistic} \\
\lambda_h^2 &=& c_h^2\left[1+\frac{(\gamma^2 - 2\gamma c_h^2 + c_h^4)\theta_0}{c_h^2-c_s^2}+\ldots\right]. \label{c_h_squared_relativistic}
\eea
We have then proved the following result.

\begin{Thm}
The relativistic system \eqref{displacement1}--\eqref{displacement3} is a linear hyperbolic system with characteristic propagation speeds given as a power series in $\theta_0$ by
\bea
\lambda_s &=& \pm c_s\left[1+\frac{(\gamma^2 - 2\gamma c_h^2 + c_h^2c_s^2)\theta_0}{2(c_s^2-c_h^2)}+\ldots\right]; \label{c_s_relativistic} \\
\lambda_h &=& \pm c_h\left[1+\frac{(\gamma^2 - 2\gamma c_h^2 + c_h^4)\theta_0}{2(c_h^2-c_s^2)}+\ldots\right]. \label{c_h_relativistic}
\eea
\end{Thm}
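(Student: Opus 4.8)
The plan is to recast the second-order system \eqref{displacement1}--\eqref{displacement3} as a first-order linear system, identify its characteristic matrix, and analyze that matrix's eigenvalues. First I would introduce $s=w'$ and $v=\dot w$, turning \eqref{displacement1}--\eqref{displacement3} into the first-order system \eqref{cp_system1}--\eqref{cp_system4}, i.e.\ $M\,\partial_t U + N\,\partial_x U = (0,0,0,-q)^{T}$ with $U=(s,v,\hat\theta,q)^{T}$ and $M,N$ the constant matrices displayed in Section~\ref{section4} (the zero-order right-hand side plays no role in the characteristic analysis). A short cofactor expansion gives $\det M = \frac{\kappa}{c_h^2}\,(1-c_h^2\theta_0)$, which is nonzero since the speed of heat cannot exceed the speed of light ($c_h\le 1$) and $\theta_0\ll 1$; hence $M$ is invertible, the characteristic propagation speeds are the eigenvalues of $\mathbb{A}=M^{-1}N$, and the system is (strictly) hyperbolic precisely when those eigenvalues are real and $\mathbb{A}$ is diagonalizable.

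Next I would use that, because $\det M\neq 0$, the characteristic equation $\det(\mathbb{A}-\lambda\,\mathbb{I})=0$ is equivalent to $\det(N-\lambda M)=0$; expanding this $4\times4$ determinant yields the biquadratic \eqref{un_rel_eigen}. Setting $\mu=\lambda^2$ reduces it to a quadratic in $\mu$, so the four values of $\lambda$ come in $\pm$ pairs built from two values of $\mu$. At $\theta_0=0$ this quadratic factors as $(\mu-c_s^2)(\mu-c_h^2)$, giving the four real eigenvalues $\pm c_s,\pm c_h$; assuming $c_s\neq c_h$ they are simple, so $\mathbb{A}$ is diagonalizable there. Since the roots of a polynomial depend continuously on its coefficients and simple roots depend analytically on them, for $\theta_0$ in a neighbourhood of $0$ the two $\mu$-roots stay real, positive and distinct, the four $\lambda$-roots stay real, nonzero and distinct, $\mathbb{A}$ stays diagonalizable, and each eigenvalue is analytic in $\theta_0$; this both establishes hyperbolicity and legitimizes the power-series expansions. (For hyperbolicity on a larger parameter range one could instead check directly that the discriminant of the $\mu$-quadratic is nonnegative and that its coefficients have the right signs, using $\gamma>0$ and $c_s^2=c_0^2-\gamma^2\theta_0>0$ from \eqref{con1}--\eqref{con2}.)

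To produce the first-order coefficients I would differentiate \eqref{un_rel_eigen} implicitly with respect to $\theta_0$, evaluate at $\theta_0=0$, substitute $\lambda^2=c_s^2$ and then $\lambda^2=c_h^2$, and solve the resulting linear relations for $\partial(\lambda^2)/\partial\theta_0$; this gives the expressions in \eqref{c_s_squared_relativistic}--\eqref{c_h_squared_relativistic}. Taking square roots via $\sqrt{c^2(1+a\theta_0+\cdots)}=c\,(1+\tfrac{a}{2}\theta_0+\cdots)$ and restoring both signs then yields \eqref{c_s_relativistic}--\eqref{c_h_relativistic}, as claimed.

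The determinant computation and the implicit differentiation are routine; the real content — and the step I expect to be the main obstacle — is the hyperbolicity claim itself, i.e.\ confirming that $\mathbb{A}$ genuinely has a full set of real eigenvalues and a basis of eigenvectors, so that the word ``hyperbolic'' is justified rather than being merely a statement about the roots of a polynomial. The one configuration that must be excluded is the resonance $c_s=c_h$: there the two speeds coincide at $\theta_0=0$, the perturbation of the resulting double eigenvalue is in general not smooth in $\theta_0$, and the denominators $c_s^2-c_h^2$ appearing in \eqref{c_s_relativistic}--\eqref{c_h_relativistic} vanish, so that degenerate case would require a separate, non-analytic treatment.
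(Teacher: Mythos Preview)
Your proposal is correct and follows essentially the same route as the paper: reduce to a first-order system via $s=w'$, $v=\dot w$, compute the characteristic polynomial \eqref{un_rel_eigen}, observe that at $\theta_0=0$ it factors as $(\lambda^2-c_s^2)(\lambda^2-c_h^2)$, and obtain the first-order corrections by implicit differentiation. Your version is in fact slightly more careful than the paper's, in that you justify the invertibility of $M$, argue analyticity of the simple roots in $\theta_0$, and flag the degenerate case $c_s=c_h$; the paper proceeds more informally on these points and simply asserts hyperbolicity once the biquadratic is obtained.
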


Assuming that the zero-temperature speed of heat is bigger than the zero-temperature speed of sound, $c_h>c_s$, the speed of heat increases with increasing temperature, whereas the speed of sound increases for 
\be
c_h^2 - c_h \sqrt{c_h^2-c_s^2} < \gamma < c_h^2 + c_h \sqrt{c_h^2-c_s^2}
\ee
and decreases otherwise. Conversely, if the zero-temperature speed of heat is smaller than the zero-temperature speed of sound, $c_h<c_s$, then the speed of heat decreases and the speed of sound increases with increasing temperature.

Similarly, for the classical system with heat waves we obtain
\bea
\lb{cp_class_system1}
&& \dot{s}-v'=0,\\
&& \dot{v}-c_s^2s'+ \gamma \hat{\theta}'=0,\\
&&\dot{\hat{\theta}}+q'+\gamma \theta_0 v'=0,\\
\lb{cp_class_system4}
&& \frac{\kappa}{c_h^2}\dot{q}+q+\kappa \hat{\theta}'=0,
\eea
so that we can write 
\bea
\lb{sys_hw}
\begin{bmatrix}
\dot{s} \\
\dot{v} \\
\dot{\hat{\theta}}\\
\dot{q}
\end{bmatrix}=\begin{bmatrix}
0      & 1&0&0 \\
c_s^2  & 0&-\gamma &0 \\
0&-\gamma \theta_0&0&-1\\
0& 0 &- {c_h^2}& 0
\end{bmatrix}
\begin{bmatrix}
{s'} \\
{v'} \\
{\hat{\theta}'}\\
{q'}
\end{bmatrix}
+
\begin{bmatrix}
0 \\
0 \\
0\\
-\frac{c_h^2}{\kappa}q
\end{bmatrix}.
\eea
The characteristic propagation speeds are given by the eigenvalues of the matrix
\bea
\lb{A2}
\mathbb{A}=
\begin{bmatrix}
	0      & 1&0&0 \\
	c_s^2  & 0&-\gamma &0 \\
	0&-\gamma \theta_0&0&-1\\
	0& 0 &- {c_h^2}& 0
\end{bmatrix},
\eea
and can be found by solving
\bea
\lb{un_class_eigen}
\det\left( 
\mathbb{A}
-\lambda\cdot  \mathbb{I}\right)
=(\lambda^2-c_s^2)(\lambda^2-c_h^2)-\gamma^2 \theta_0 \lambda^2=0
\eea
(which can be seen as the limit of \eqref{un_rel_eigen} when $c_h^2 \theta_0 \ll 1$ and $c_h^2 \ll \gamma$). Again, for $\theta_0=0$ the eigenvalues are given by
\bea
\lambda^2 = c_s^2 \qquad \text{ or } \qquad \lambda^2 = c_h^2,
\eea
corresponding to the speed of sound and speed of heat along the rod. To obtain the first order correction in $\theta_0$ we differentiate equation~\eqref{un_class_eigen} at $\theta_0=0$,
\bea
(\lambda^2-c_s^2)\frac{\partial \lambda^2}{\partial \theta_0} + (\lambda^2-c_h^2)\frac{\partial \lambda^2}{\partial \theta_0} - \gamma^2 \lambda^2 = 0,
\eea
obtaining 
\bea
&&\lambda^2=c_s^2 \qquad \Rightarrow \qquad \frac{\partial \lambda^2}{\partial \theta_0}=\frac{c_s^2 \gamma^2}{c_s^2-c_h^2}; \\
&&\lambda^2=c_h^2 \qquad \Rightarrow \qquad \frac{\partial \lambda^2}{\partial \theta_0}=\frac{c_h^2 \gamma^2}{c_h^2-c_s^2}.
\eea
Therefore, we get for the eigenvalues
\bea
\lambda_s^2 &=& c_s^2\left(1+\frac{\gamma^2\theta_0}{c_s^2-c_h^2}+\ldots\right); \label{c_s_squared_classical} \\
\lambda_h^2 &=& c_h^2\left(1+\frac{\gamma^2\theta_0}{c_h^2-c_s^2}+\ldots\right). \label{c_h_squared_classical}
\eea
We have then proved the following result.

\begin{Thm}
The classical system with heat waves \eqref{heatwaves1}--\eqref{heatwaves3} is a linear hyperbolic system with characteristic propagation speeds given as a power series in $\theta_0$ by
\bea
\lambda_s &=& \pm c_s\left[1+\frac{\gamma^2\theta_0}{2(c_s^2-c_h^2)}+\ldots\right]; \label{c_s_classical} \\
\lambda_h &=& \pm c_h\left[1+\frac{\gamma^2\theta_0}{2(c_h^2-c_s^2)}+\ldots\right].
\eea
\end{Thm}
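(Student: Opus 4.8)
The statement is essentially a clean summary of the computation carried out in the preceding paragraphs, so the plan is just to organise that computation into a self-contained argument. First I would recall that, after the substitution $s=w'$, $v=\dot w$, the classical system with heat waves takes the first-order form \eqref{cp_class_system1}--\eqref{cp_class_system4}, equivalently \eqref{sys_hw}, and that by the standard theory of first-order systems (see \cite{Alinhac099}) the characteristic propagation speeds are exactly the eigenvalues of the coefficient matrix $\mathbb{A}$ in \eqref{A2}. Thus the whole problem reduces to the spectral analysis of $\mathbb{A}$.

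Second, I would compute the characteristic polynomial by expanding the $4\times4$ determinant $\det(\mathbb{A}-\lambda\mathbb{I})$; the block structure of $\mathbb{A}$ (two essentially decoupled $2\times2$ blocks linked only through the single $\gamma\theta_0$ coupling entry) makes this straightforward and yields \eqref{un_class_eigen}, i.e. $(\lambda^2-c_s^2)(\lambda^2-c_h^2)-\gamma^2\theta_0\lambda^2=0$. Setting $\mu=\lambda^2$ this becomes the quadratic $\mu^2-(c_s^2+c_h^2+\gamma^2\theta_0)\mu+c_s^2c_h^2=0$, whose discriminant equals $(c_s^2-c_h^2)^2+\gamma^2\theta_0\bigl(2(c_s^2+c_h^2)+\gamma^2\theta_0\bigr)$ and is therefore strictly positive (recall $\theta_0>0$, $\gamma>0$, and $c_s^2>0$ by \eqref{con2}), while the two roots have positive sum and positive product $c_s^2c_h^2$. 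Hence both roots $\mu_1,\mu_2$ for $\mu$ are real and positive, giving four real eigenvalues $\pm\sqrt{\mu_1}$, $\pm\sqrt{\mu_2}$; for $c_s\neq c_h$ and $\theta_0$ small these are moreover distinct, so $\mathbb{A}$ is diagonalisable and the system is strictly hyperbolic.

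Third, to obtain the expansion in $\theta_0$ I would regard each branch $\mu=\mu(\theta_0)$ as the solution of $F(\mu,\theta_0):=(\mu-c_s^2)(\mu-c_h^2)-\gamma^2\theta_0\mu=0$ with $\mu(0)=c_s^2$ (respectively $c_h^2$). Since $\partial F/\partial\mu=(\mu-c_s^2)+(\mu-c_h^2)-\gamma^2\theta_0$ equals $\pm(c_s^2-c_h^2)\neq0$ at $\theta_0=0$, the implicit function theorem produces a unique analytic branch near $\theta_0=0$; differentiating $F=0$ in $\theta_0$ and evaluating at $\theta_0=0$ gives $\mu'(0)=\gamma^2c_s^2/(c_s^2-c_h^2)$ and $\mu'(0)=\gamma^2c_h^2/(c_h^2-c_s^2)$ for the two branches, which are \eqref{c_s_squared_classical}--\eqref{c_h_squared_classical}. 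Finally, writing $\lambda=\pm\sqrt{\mu}=\pm\sqrt{\mu(0)}\,\bigl(1+\tfrac{\mu'(0)}{2\mu(0)}\theta_0+\ldots\bigr)$ and substituting $\mu(0)=c_s^2$ or $c_h^2$ yields the claimed series for $\lambda_s$ and $\lambda_h$.

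The computation is routine throughout; the only point that needs a little care beyond the bare algebra — and hence the closest thing to a "main obstacle" — is justifying the word \emph{hyperbolic}, i.e. checking that the two roots in $\mu$ are real, positive and (generically) distinct so that $\mathbb{A}$ admits a full basis of real eigenvectors. The sign of the discriminant together with the hypotheses $\theta_0>0$, $\gamma>0$ and $c_s^2>0$ (the last from \eqref{con2}) are precisely what make this work. One may also observe that this entire argument is the $c_h^2\theta_0\ll1$, $c_h^2\ll\gamma$ degeneration of the relativistic computation behind the previous theorem, so no genuinely new input is required.
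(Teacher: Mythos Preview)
Your proposal is correct and follows essentially the same approach as the paper: reduce to the first-order system, identify the characteristic speeds with the eigenvalues of $\mathbb{A}$, derive the biquadratic \eqref{un_class_eigen}, and differentiate implicitly at $\theta_0=0$ to get the first-order corrections \eqref{c_s_squared_classical}--\eqref{c_h_squared_classical} before taking square roots. The only notable addition is your explicit discriminant analysis showing both roots in $\mu=\lambda^2$ are real and positive, which justifies the word ``hyperbolic'' that the paper asserts without verification; this is a welcome improvement rather than a different route.
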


This result can be obtained from equations \eqref{c_s_relativistic} and \eqref{c_h_relativistic}, as one might expect, in the limit $c_s^2, c_h^2 \ll \gamma$. Assuming that the zero-temperature speed of heat is bigger than the zero-temperature  speed of sound, $c_h>c_s$, the speed of sound decreases and the speed of heat increases with increasing temperature. Conversely, if the zero-temperature speed of heat is smaller than the zero-temperature speed of sound, $c_h<c_s$, then the speed of sound increases and the speed of heat decreases with increasing temperature.

It is interesting to note that the speed of the mechanical waves is not exactly the nominal speed of sound \eqref{c_square}, both for the relativistic system and the classical system with heat waves, except at zero temperature. Indeed, if we substitute $c_s^2$ as given in \eqref{cssquared} into \eqref{c_s_squared_relativistic} or \eqref{c_s_squared_classical} we obtain\footnote{Note that the expansions \eqref{c_s_squared_relativistic}, \eqref{c_h_squared_relativistic}, \eqref{c_s_squared_classical} and \eqref{c_h_squared_classical} were perfomed assuming fixed $c_s$ (hence variable $c_0$).}
\bea
\lambda_s^2 &=& c_0^2\left[1+\frac{\left[\gamma^2(1+c_h^2-c_0^2) - 2\gamma c_h^2 + c_h^2c_0^2\right]\theta_0}{c_0^2-c_h^2}+\ldots\right]
\eea
or
\bea
\lambda_s^2 &=& c_0^2\left[1+\frac{\gamma^2(1+c_h^2-c_0^2)\theta_0}{c_0^2-c_h^2}+\ldots\right].
\eea
The reason for this discrepancy is that the speed of sound applies to adiabatic processes, whereas our systems feature energy dissipation and heat flow.

%
%
\section{Solving the relativistic system}\label{section5}
We will now use a Fourier decomposition to prove that the solutions of the relativistic system \eqref{displacement1}-\eqref{displacement3} decay in time. This is most easily done for the boundary conditions \eqref{bd_con}: If, for simplicity, we use our remaining freedom in the choice of units to set the length of the rod to $L=\pi$, then these boundary conditions are automatically enforced by choosing
\bea
\lb{thetacos}
\hat{\theta}(x, t)&=&\sum^{\infty}_{n=1} a_n(t) \cos(n x),\\
\lb{wsin}
w(x, t)&=&\sum^{\infty}_{n=1} b_n(t) \sin(n x),\\
\lb{qsin}
q(x, t)&=&\sum^{\infty}_{n=1} d_n(t) \sin(n x).
\eea
As is well known, the functions $\hat{\theta}$, $w$ and $q$ admit uniformly and absolutely convergent Fourier expansions of this kind if we assume the natural regularities, namely $\hat{\theta}$ and $q$ of class $C^1$ and $w$ of class $C^2$; we can assume that the coefficient $a_0$ in the expansion of $\hat{\theta}$ vanishes (or, equivalently, that $\hat{\theta}$ has zero spatial average) by subtracting off the trivial solution $\hat{\theta} \equiv a_0$, $w \equiv q \equiv 0$. Substituting this ansatz into \eqref{displacement1}-\eqref{displacement3}, we obtain, for each $n \in \bbN$, the system of ordinary differential equations
\bea
\dot{a}_n(t)&=&-n\gamma\theta_0\dot{b}_n(t) - nd_n(t),\\
\ddot{b}_n(t)&=&-n^2c_s^2b_n(t)+n\gamma a_n(t)-\dot{d}_n(t),\\
\dot{d}_n(t)&=&-\frac{c_h^2}{\kappa}d_n(t)+nc_h^2a_n(t)-c_h^2\theta_0\ddot{b}_n(t),
\eea
whence
\bea
\ddot{b}_n(t)&=&\frac{n(\gamma-c_h^2)}{1-c_h^2\theta_0}a_n(t)-\frac{n^2c_s^2}{1-c_h^2\theta_0}b_n(t)+\frac{c_h^2}{\kappa(1-c_h^2 \theta_0)}d_n(t),\\
\dot{d}_n(t)&=& \frac{nc_h^2(1-\gamma \theta_0)}{1-c_h^2\theta_0}a_n(t)+\frac{n^2c_h^2 c_s^2 \theta_0}{1-c_h^2\theta_0}b_n(t)-\frac{c_h^2}{\kappa(1-c_h^2 \theta_0)}d_n(t).
\eea
We transform this into a first order system by setting
\bea
\dot{b}_n(t)&=&f_n(t),
\eea
whence
\bea
\dot{a}_n(t)&=&-nd_n(t)-n\gamma\theta_0 f_n(t),\\
\dot{d}_n(t)&=&\frac{nc_h^2(1-\gamma \theta_0)}{1-c_h^2\theta_0}a_n(t)+\frac{n^2 c_h^2 c_s^2 \theta_0}{1-c_h^2\theta_0}b_n(t)-\frac{c_h^2}{\kappa(1-c_h^2 \theta_0)}d_n(t),\\
\dot{f}_n(t)&=&\frac{n(\gamma-c_h^2)}{1-c_h^2\theta_0}a_n(t)-\frac{n^2c_s^2}{1-c_h^2\theta_0}b_n(t)+\frac{c_h^2}{\kappa(1-c_h^2 \theta_0)}d_n(t).
\eea
In matrix form, the system reads
\bea
\begin{bmatrix}
\dot{a}_n  \\ 
\dot{b}_n  \\
\dot{d}_n  \\
\dot{f}_n 
\end{bmatrix}
=
\begin{bmatrix}
0    &0&- n &-n\gamma\theta_0 \\ 
0  & 0&0 &1 \\
\frac{nc_h^2(1-\gamma \theta_0)}{1-c_h^2\theta_0} & \frac{n^2 c_h^2 c_s^2 \theta_0}{1-c_h^2\theta_0}&-\frac{c_h^2}{\kappa(1-c_h^2 \theta_0)}&0 \\
\frac{n(\gamma-c_h^2)}{1-c_h^2 \theta_0}&-\frac{n^2c_s^2}{1-c_h^2 \theta_0}&\frac{c_h^2}{\kappa(1-c_h^2 \theta_0)}&0
\end{bmatrix}
\begin{bmatrix}
{a}_n  \\ 
{b}_n  \\
{d}_n  \\
{f}_n 
\end{bmatrix}.
\eea
The eigenvalues of the system's matrix are obtained by solving the polynomial equation
\begin{align} \label{eigen_rel_sys}
& \kappa  (1-c_h^2\theta_0)\lambda^4 + c_h^2 \lambda^3 + n^2 \kappa (c_h^2 + c_s^2 + \gamma^2 \theta_0 - 2 c_h^2 \gamma \theta_0) \lambda^2 \nonumber \\
& + n^2 c_h^2 (c_s^2 + \gamma^2 \theta_0) \lambda  + n^4 \kappa c_h^2 c_s^2 = 0.
\end{align}
It is interesting to note that for $\theta_0=0$ the eigenvalues are given by
\bea
\lambda= \frac{-c_h^2\pm c_h \sqrt{c_h^2- 4 n^2\kappa^2}}{2\kappa} \qquad \text{ or } \qquad  \lambda=\pm in c_s.
\eea
These correspond to decaying modes, associated to the diffusion constant $\kappa$ and the speed of heat $c_h$, and non-decaying oscillating modes, associated to the speed of sound $c_s$.

To obtain the first order correction\footnote{We expand the eigenvalues about zero equilibrium temperature $\theta_0$, instead of zero coupling parameter $\gamma$ (as is usually done), to facilitate the comparison with the classical system with heat waves, where $\gamma \ll 1$ would compete with $c_h^2, c_s^2 \ll 1$.} in $\theta_0$ we differentiate equation~\eqref{eigen_rel_sys} at $\theta_0=0$,
\begin{align}
&\left[ 4 \kappa\lambda^3 + 3c_h^2 \lambda^2 + 2n^2 \kappa (c_h^2 + c_s^2 ) \lambda + n^2 c_h^2 c_s^2  
\right] \frac{\partial \lambda}{\partial \theta_0} \nonumber\\
&-\kappa c_h^2 \lambda^4 + n^2 \kappa (\gamma^2 - 2c_h^2\gamma) \lambda^2 + n^2c_h^2\gamma^2\lambda = 0.
\end{align}
For $\lambda=\pm in c_s$ we obtain
\begin{align}
& \left[- 2 n^2 c_h^2 c_s^2 \pm 2i n^3\kappa c_s (c_h^2 - c_s^2) \right] \frac{\partial \lambda}{\partial \theta_0} \nonumber \\
& - n^4\kappa c_s^2 (\gamma^2 - 2 c_h^2 \gamma + c_h^2 c_s^2) \pm i n^3 c_h^2 c_s \gamma^2 = 0,
\end{align}
and so
\begin{align} \lb{partial_lambda_rel}
\frac{\partial \lambda}{\partial \theta_0} & = \frac{n^2\kappa c_s (\gamma^2 - 2 c_h^2 \gamma + c_h^2 c_s^2) \mp i n c_h^2 \gamma^2}{- 2 c_h^2 c_s \pm 2i n\kappa (c_h^2 - c_s^2)} \nonumber \\
& = \frac{n^2 \kappa c_h^4 (- \gamma^2 + 2 c_s^2 \gamma - c_s^4) \pm in c_h^4 c_s \gamma^2}{2[c_h^4 c_s^2 + n^2\kappa^2 (c_h^2 - c_s^2)^2]} \nonumber \\
& \mp \frac{i n^3 \kappa^2 c_s (c_h^2 - c_s^2) (\gamma^2 - 2 c_h^2 \gamma + c_h^2 c_s^2)}{2[c_h^4 c_s^2 + n^2\kappa^2 (c_h^2 - c_s^2)^2]}.
\end{align}
Note that as $n \to \infty$ \eqref{partial_lambda_rel} becomes
\bea
\frac{\partial \lambda}{\partial \theta_0} \sim \mp \frac{i n c_s (\gamma^2 - 2 c_h^2 \gamma + c_h^2 c_s^2)}{2(c_h^2 - c_s^2)},
\eea
and so
\bea
\lambda \sim \pm i n c_s \left[ 1 - \frac{ (\gamma^2 - 2 c_h^2 \gamma + c_h^2 c_s^2) \theta_0}{2(c_h^2 - c_s^2)} + \ldots \right],
\eea
in agreement with \eqref{c_s_relativistic}.

\begin{Thm} \label{Thmrel}
All modes in the Fourier expansion \eqref{thetacos}-\eqref{qsin} for the solutions of the relativistic system \eqref{displacement1}--\eqref{displacement3} with boundary conditions \eqref{bd_con} decay exponentially in time for $\theta_0 > 0$, except at the critical temperature $\theta_0 = (\gamma - c_s^2) / \gamma^2$ if $\gamma > c_s^2$. Consequently, given initial conditions $\hat{\theta}(0,x)$, $w(0,x)$, $\dot{w}(0,x)$ and $q(0,x)$ of class $C^1$, with $\int_0^\pi \hat{\theta}(0,x) dx = 0$, we have
\be
\lim_{t \to +\infty} \hat{\theta}(x, t) = \lim_{t \to +\infty} w(t,x) = \lim_{t \to +\infty} q(t,x) = 0
\ee
for $\theta_0 > 0$, except at the critical temperature $\theta_0 = (\gamma - c_s^2) / \gamma^2$ if $\gamma > c_s^2$.
\end{Thm}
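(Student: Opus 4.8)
The plan is to use the Fourier ansatz \eqref{thetacos}--\eqref{qsin} to reduce the equations to the decoupled $4\times4$ linear ODE systems indexed by $n\in\bbN$ displayed above, to show that the coefficient matrix of the $n$-th system has all four eigenvalues with strictly negative real part — so that the corresponding mode decays exponentially — exactly when $\theta_0>0$ and $\theta_0$ is not the critical value, and finally to recombine the modes. For the first part, write the characteristic quartic \eqref{eigen_rel_sys} as $a_0\lambda^4+a_1\lambda^3+a_2\lambda^2+a_3\lambda+a_4=0$ with $a_0=\kappa(1-c_h^2\theta_0)$, $a_1=c_h^2$, $a_2=n^2\kappa(c_h^2+c_s^2+\gamma^2\theta_0-2c_h^2\gamma\theta_0)$, $a_3=n^2c_h^2(c_s^2+\gamma^2\theta_0)$ and $a_4=n^4\kappa c_h^2c_s^2$, and apply the Routh--Hurwitz criterion. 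For $c_h^2\theta_0<1$ (recall $\theta_0\ll1$) one has $a_0,a_1,a_4>0$, and the two remaining Hurwitz determinants reduce, after a short computation, to
\be
a_1a_2-a_0a_3=n^2\kappa c_h^4\left[(1-\gamma\theta_0)^2+c_s^2\theta_0\right]>0
\ee
and
\be
a_3(a_1a_2-a_0a_3)-a_1^2a_4=n^4\kappa c_h^6\,\theta_0\left[(c_s^2-\gamma)+\gamma^2\theta_0\right]^2 ,
\ee
which is strictly positive for every $\theta_0>0$ unless $(c_s^2-\gamma)+\gamma^2\theta_0=0$, i.e.\ $\theta_0=(\gamma-c_s^2)/\gamma^2$ — a value in $(0,\infty)$ only when $\gamma>c_s^2$. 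Hence, away from it, Routh--Hurwitz gives $\Real\lambda<0$ for all four roots and all $n$, so every mode decays exponentially. (That no root has positive real part can alternatively be read off from the decreasing energy \eqref{decr_energy}, which is positive definite when $c_h^2\theta_0<1$ and, under the ansatz, splits as $\sum_n\tilde E_n$ with each modal energy $\tilde E_n$ a positive-definite non-increasing quadratic form along the $n$-th system.)

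At the critical temperature the third Hurwitz determinant vanishes, which forces a conjugate pair of purely imaginary roots. Substituting $\lambda=i\omega$ into \eqref{eigen_rel_sys} and splitting into real and imaginary parts gives $\omega^2=n^2(c_s^2+\gamma^2\theta_0)=n^2\gamma$, so $\pm in\sqrt\gamma$ are eigenvalues of the $n$-th matrix for every $n$, and the associated real two-plane carries an undamped oscillation; thus not all modes decay, which is exactly the stated exception.

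To recombine the modes, substitute the ansatz into \eqref{decr_energy} and \eqref{dt_energy}: one obtains $\tilde E=\sum_n\tilde E_n$ with each $\tilde E_n$ non-increasing (its derivative is $-\tfrac1\kappa$ times the $n$-th mode's share of $\int_0^\pi q^2$), with $\sum_n\tilde E_n(0)=\tilde E(0)<\infty$ by the regularity of the data, and with $\tilde E_n(t)\to0$ for each fixed $n$ by the previous step; dominated convergence over $n$ then gives $\tilde E(t)\to0$, hence $\hat\theta(\cdot,t),\dot w(\cdot,t),w'(\cdot,t),q(\cdot,t)\to0$ in $L^2(0,\pi)$, and $w(\cdot,t)\to0$ in $L^2$ by Poincar\'e since $w(0,t)=w(\pi,t)=0$. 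Differentiating \eqref{displacement1}--\eqref{displacement3} in $x$ shows that $(\hat\theta',w',q')$ solves the same system, and the Fourier expansions show that it satisfies boundary conditions for which the computation \eqref{dt_energy} still applies (the boundary terms vanish because $\hat\theta'$ and $w''$ vanish at the endpoints); the associated energy equals $\sum_n n^2\tilde E_n$, is therefore non-increasing and tends to $0$, so $\hat\theta'(\cdot,t),q'(\cdot,t),w''(\cdot,t)\to0$ in $L^2$. Thus $\hat\theta(\cdot,t)$ and $q(\cdot,t)\to0$ in $H^1(0,\pi)$ and $w(\cdot,t)\to0$ in $H^2(0,\pi)$, and the embedding $H^1(0,\pi)\hookrightarrow C^0([0,\pi])$ gives the claimed pointwise (in fact uniform) limits.

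The main obstacle is this last recombination: mode-by-mode exponential decay controls each Fourier coefficient separately, with rates that could a priori degenerate as $n\to\infty$, so one genuinely needs the uniform bounds provided by the two decreasing energies (zeroth and first order), together with the compatible boundary conditions for the differentiated system, to upgrade pointwise decay of the coefficients into decay of $\hat\theta$, $w$ and $q$ themselves. The Routh--Hurwitz bookkeeping — in particular recognising the third Hurwitz determinant as $n^4\kappa c_h^6\theta_0[(c_s^2-\gamma)+\gamma^2\theta_0]^2$ — is the other point that needs care, but it is only a finite computation.
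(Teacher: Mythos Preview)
Your proof is correct and takes a genuinely different route from the paper's on both main steps.

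\textbf{Spectral part.} The paper argues by continuation in $\theta_0$: at $\theta_0=0$ two eigenvalues are $\pm inc_s$; computing $\partial\lambda/\partial\theta_0$ shows they enter $\Real\lambda<0$; substituting $\lambda=iy$ into \eqref{eigen_rel_sys} identifies $\theta_0=0$ and the critical value as the only places where the imaginary axis is touched; and a second-derivative computation at the critical value shows the eigenvalues do not cross. Your Routh--Hurwitz argument is more direct: the factorisations $a_1a_2-a_0a_3=n^2\kappa c_h^4[(1-\gamma\theta_0)^2+c_s^2\theta_0]$ and $\Delta_3=n^4\kappa c_h^6\theta_0[(c_s^2-\gamma)+\gamma^2\theta_0]^2$ are correct, and the perfect square in $\Delta_3$ gives at once both the critical temperature and the fact that no eigenvalue ever crosses into $\Real\lambda>0$ --- replacing the paper's two separate perturbative computations. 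The trade-off is that the paper's method produces the leading eigenvalue corrections \eqref{partial_lambda_rel} as a byproduct, which are used elsewhere in the text.

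\textbf{Recombination.} The paper truncates the Fourier series, bounds the tail $\sum_{n>N}|a_n(t)|$ by invoking ``$|e^{\lambda t}|\le1$'' (which tacitly ignores the eigenvector matrices of the $4\times4$ blocks), and handles the finite head by modal decay. Your energy route --- dominated convergence on $\sum_n\tilde E_n$ for $L^2$ decay, then the differentiated energy for $H^1$ and hence uniform decay --- is more robust and sidesteps that issue entirely. One small gap: the differentiated energy $\sum_n n^2\tilde E_n$ contains $\tfrac{\theta_0c_s^2}{2}\int(w'')^2$, so finiteness at $t=0$ needs $w(0,\cdot)\in H^2$, whereas the hypothesis only gives $w(0,\cdot)\in C^1$. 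You should either strengthen the assumption on $w$ to $C^2$ (natural, since the system is second order in $w$) or insert a density argument.
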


\begin{proof}
Since
\bea
\Real \left( \frac{\partial \lambda}{\partial \theta_0} \right) = - \frac{n^2 \kappa c_h^4 (\gamma - c_s^2)^2}{2[c_h^4 c_s^2 + n^2\kappa^2 (c_h^2 - c_s^2)^2]} < 0,
\eea
the eigenvalues (which depend continuously on $\theta_0$) enter the half-plane $\Real(\lambda)<0$ as $\theta_0$ increases. If we set $\lambda=iy$ in equation~\eqref{eigen_rel_sys}, with $y \in \bbR$, then we obtain from the imaginary part
\be
y = 0 \qquad \text{ or } \qquad y^2 = n^2(c_s^2+\gamma^2 \theta_0).
\ee
Now equation~\eqref{eigen_rel_sys} is never satisfied in the first case, and is satisfied in the second case if and only if
\be
\theta_0 = 0 \qquad \text{ or } \qquad \theta_0 = \frac{\gamma - c_s^2}{\gamma^2}.
\ee
If $\gamma > c_s^2$ then the eigenvalues might leave the half-plane $\Real(\lambda)<0$ as the critical temperature $\theta_0 = (\gamma - c_s^2) / \gamma^2$ was reached;\footnote{It is interesting to note that if we use the adiabatic speed of sound $c_0$ instead of $c_s$ then the condition defining the critical temperature becomes $c_0^2=\gamma$; for this parameterization of the equilibrium state, the eigenvalues with zero real part occur only when this relation holds, and $\theta_0$ can take any value.} however, differentiating equation \eqref{eigen_rel_sys} again (twice), this time at $\theta_0 = (\gamma - c_s^2) / \gamma^2$ and with $\lambda = \pm i n \gamma^\frac12$, we find
\bea
\Real \left( \frac{\partial \lambda}{\partial \theta_0} \right) &=& 0; \\
\Real \left( \frac{\partial^2 \lambda}{\partial \theta_0^2} \right) &=& -\frac{n^2 \kappa c_h^4 \gamma^4 \theta_0}{n^2 \kappa^2 (\gamma - c_h^2)^2 + \gamma c_h^4} < 0,
\eea
and so the eigenvalues cannot leave the half-plane $\Real(\lambda)<0$ as $\theta_0$ increases. Consequently, the Fourier coefficients $a_n(t)$, $b_n(t)$ and $d_n(t)$ in \eqref{thetacos}-\eqref{qsin} decay exponentially for $\theta_0 > 0$, except at the critical temperature $\theta_0 = (\gamma - c_s^2) / \gamma^2$ when $\gamma > c_s^2$. 

Given $\delta > 0$, let $N \in \bbN$ be such that
\be
\sum_{n>N} |a_n(0)|, \sum_{n>N} |b_n(0)|, \sum_{n>N} |d_n(0)| < \frac{\delta}2
\ee
(which must exist because the Fourier series of a $C^1$ function is absolutely convergent). Since $|e^{\lambda t}| \leq 1$ for $\Real(\lambda t) \leq 0$, we then have
\be
\sum_{n>N} |a_n(t)|, \sum_{n>N} |b_n(t)|, \sum_{n>N} |d_n(t)| < \frac{\delta}2
\ee
for $t \geq 0$. On the other hand, since each Fourier coefficient decays exponentially, we can choose $T>0$ such that for all $t \geq T$
\be
\sum_{n\leq N} |a_n(t)|, \sum_{n\leq N} |b_n(t)|, \sum_{n\leq N} |d_n(t)| < \frac{\delta}2.
\ee
It is then clear from \eqref{thetacos}-\eqref{qsin} that
\be
|\hat{\theta}(x, t)|, |w(t,x)|, |q(t,x)| < \delta
\ee
for $t \geq T$.
\end{proof}
%
%
\section{Solving the classical system with heat waves}\label{section6}
For comparison purposes, we will now use the Fourier decomposition \eqref{thetacos}-\eqref{qsin} to prove that the solutions of the classical system with heat waves \eqref{heatwaves1}-\eqref{heatwaves3} with boundary conditions \eqref{bd_con} also decay in time. The Fourier modes are easily seen to satisfy the system of ordinary differential equations
\bea
\dot{a}_n(t)&=&-n\gamma\theta_0\dot{b}_n(t) - nd_n(t),\\
\ddot{b}_n(t)&=&-n^2c_s^2b_n(t)+n\gamma a_n(t),\\
\dot{d}_n(t)&=&-\frac{c_h^2}{\kappa}d_n(t)+nc_h^2a_n(t),
\eea
for each $n \in \bbN$. We transform this into a first order system by setting
\bea
\dot{b}_n(t)&=&f_n(t),
\eea
whence
\bea
\lb{abcsyst2}
\dot{a}_n(t)&=&-nd_n(t)-n\gamma\theta_0 f_n(t),\\
\dot{d}_n(t)&=&nc_h^2a_n(t)-\frac{c_h^2}{\kappa}d_n(t), \\
\dot{f}_n(t)&=&n\gamma a_n(t)-n^2 c_s^2b_n(t).
\eea
In matrix form, the system reads
\bea
\begin{bmatrix}
\dot{a}_n  \\ 
\dot{b}_n  \\
\dot{d}_n  \\
\dot{f}_n 
\end{bmatrix}
=
\begin{bmatrix}
0    &0&- n &-n\gamma \theta_0 \\ 
0  & 0&0 &1 \\
n c_h^2 & 0&-\frac{c_h^2}{\kappa}&0 \\
n\gamma &-n^2c_s^2&0&0
\end{bmatrix}
\begin{bmatrix}
{a}_n  \\ 
{b}_n  \\
{d}_n  \\
{f}_n 
\end{bmatrix}.
\eea
The eigenvalues of the system's matrix are obtained by solving the polynomial equation
\bea \label{eigen_heat_wave_sys}
\kappa \lambda^4 + c_h^2 \lambda^3 + n^2\kappa (c_h^2 + c_s^2 + \gamma^2 \theta_0) \lambda^2 + n^2 c_h^2 (c_s^2 + \gamma^2 \theta_0) \lambda + n^4\kappa c_h^2 c_s^2 = 0
\eea
(which can be seen as the limit of \eqref{eigen_rel_sys} when $c_h^2 \theta_0 \ll 1$ and $c_h^2 \ll \gamma$). It is interesting to note that for $\theta_0=0$ the eigenvalues are given by
\bea
\lambda= \frac{-c_h^2\pm c_h \sqrt{c_h^2- 4 n^2\kappa^2}}{2\kappa} \qquad \text{ or } \qquad  \lambda=\pm in c_s.
\eea
These correspond to decaying modes, associated to the diffusion constant $\kappa$ and the speed of heat $c_h$, and non-decaying oscillating modes, associated to the speed of sound $c_s$.

To obtain the first order correction in $\theta_0$ we differentiate equation~\eqref{eigen_heat_wave_sys} at $\theta_0=0$,
\bea
\left[4 \kappa \lambda^3 + 3c_h^2\lambda^2 + 2 n^2\kappa (c_h^2 + c_s^2) \lambda + n^2 c_h^2 c_s^2\right] \frac{\partial \lambda}{\partial \theta_0} + n^2\kappa \gamma^2 \lambda^2 + n^2 c_h^2 \gamma^2 \lambda = 0.
\eea
For $\lambda=\pm in c_s$ we obtain
\bea
\left[- 2 n^2 c_h^2 c_s^2 \pm 2i n^3\kappa c_s (c_h^2 - c_s^2) \right] \frac{\partial \lambda}{\partial \theta_0} - n^4\kappa c_s^2 \gamma^2 \pm i n^3 c_h^2 c_s \gamma^2 = 0,
\eea
and so
\begin{align} \lb{partial_lambda_classical}
\frac{\partial \lambda}{\partial \theta_0} & = \frac{n \gamma^2 (n \kappa c_s \mp i c_h^2)}{- 2 c_h^2 c_s \pm 2i n\kappa (c_h^2 - c_s^2)} \nonumber \\
& = \frac{n \gamma^2 (- n \kappa c_h^4 \pm i c_h^4 c_s \mp i n^2 \kappa^2 c_s (c_h^2 - c_s^2))}{2 [c_h^4 c_s^2 + n^2\kappa^2 (c_h^2 - c_s^2)^2]}.
\end{align}
This result can be obtained from equation \eqref{partial_lambda_rel}, as one might expect, in the limit $c_s^2, c_h^2 \ll \gamma$. Note also that as $n \to \infty$ \eqref{partial_lambda_classical} becomes
\bea
\frac{\partial \lambda}{\partial \theta_0} \sim \mp \frac{i n c_s \gamma^2}{2(c_h^2 - c_s^2)},
\eea
and so
\bea
\lambda \sim \pm i n c_s \left[ 1 - \frac{\gamma^2\theta_0}{2(c_h^2 - c_s^2)} + \ldots \right],
\eea
in agreement with \eqref{c_s_classical}.

\begin{Thm}
All modes in the Fourier expansion \eqref{thetacos}-\eqref{qsin} for the solutions of the classical system with heat waves \eqref{heatwaves1}-\eqref{heatwaves3} with boundary conditions \eqref{bd_con} decay exponentially in time for $\theta_0 > 0$. Consequently, given initial conditions $\hat{\theta}(0,x)$, $w(0,x)$, $\dot{w}(0,x)$ and $q(0,x)$ of class $C^1$, with $\int_0^\pi \hat{\theta}(0,x) dx = 0$, we have
\be
\lim_{t \to +\infty} \hat{\theta}(x, t) = \lim_{t \to +\infty} w(t,x) = \lim_{t \to +\infty} q(t,x) = 0
\ee
for $\theta_0 > 0$.
\end{Thm}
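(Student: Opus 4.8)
The plan is to adapt, in simplified form, the argument from the proof of Theorem~\ref{Thmrel}: show that for every $n \in \bbN$ and every $\theta_0 > 0$ all four roots of the quartic \eqref{eigen_heat_wave_sys} have strictly negative real part, so that the Fourier coefficients $a_n(t), b_n(t), d_n(t), f_n(t)$ decay exponentially, and then repeat verbatim the summation argument that upgrades this to uniform decay of $\hat{\theta}$, $w$ and $q$.

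First I would record the root locations at $\theta_0 = 0$: two roots are $\lambda = (-c_h^2 \pm c_h\sqrt{c_h^2 - 4n^2\kappa^2})/(2\kappa)$, which lie in the open left half-plane (real part $-c_h^2/(2\kappa)$ when the discriminant is negative, two negative reals otherwise), and the other two are $\lambda = \pm i n c_s$ on the imaginary axis. Equation \eqref{partial_lambda_classical} gives, for the latter pair, $\Real(\partial\lambda/\partial\theta_0) = -n^2\kappa\gamma^2 c_h^4 / \bigl(2[c_h^4 c_s^2 + n^2\kappa^2(c_h^2 - c_s^2)^2]\bigr) < 0$, so these two roots move into $\Real(\lambda)<0$ as $\theta_0$ increases from zero. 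Since the roots depend continuously on $\theta_0$, it then suffices to show no root returns to the imaginary axis for $\theta_0 > 0$.

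This is the crux, and it is where the classical-with-heat-waves case is genuinely easier than the relativistic one. Setting $\lambda = iy$ with $y \in \bbR$ in \eqref{eigen_heat_wave_sys} and splitting into real and imaginary parts, the imaginary part forces $y = 0$ or $y^2 = n^2(c_s^2 + \gamma^2\theta_0)$; the first case contradicts the real part (which collapses to $n^4\kappa c_h^2 c_s^2 = 0$), and substituting the second into the real part, dividing by $n^4\kappa$ and writing $u = c_s^2 + \gamma^2\theta_0$, yields $c_h^2(c_s^2 - u) = 0$, i.e.\ $\gamma^2\theta_0 = 0$, hence $\theta_0 = 0$. Unlike in \eqref{eigen_rel_sys}, no second solution $\theta_0 = (\gamma - c_s^2)/\gamma^2$ appears, so there is no critical temperature and no need for a second-derivative computation. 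Consequently all four roots stay in $\Real(\lambda)<0$ for every $\theta_0 > 0$, and all Fourier coefficients decay exponentially.

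The last step is identical to the end of the proof of Theorem~\ref{Thmrel}: given $\delta>0$, absolute convergence of the Fourier series of the $C^1$ initial data provides $N$ with $\sum_{n>N}|a_n(0)|, \sum_{n>N}|b_n(0)|, \sum_{n>N}|d_n(0)| < \delta/2$, and since $|e^{\lambda t}| \le 1$ whenever $\Real(\lambda)\le 0$ the corresponding tails of the solution series remain $<\delta/2$ for all $t\ge 0$; exponential decay of the finitely many remaining coefficients then furnishes $T>0$ with $\sum_{n\le N}|a_n(t)|, \sum_{n\le N}|b_n(t)|, \sum_{n\le N}|d_n(t)| < \delta/2$ for $t \ge T$, so that $|\hat{\theta}(x,t)|, |w(x,t)|, |q(x,t)| < \delta$ for $t \ge T$. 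I do not expect any real obstacle beyond the sign bookkeeping in the crossing analysis; the one point to verify carefully is that this analysis really does yield $\theta_0 = 0$ as its unique solution, since that is precisely what removes the critical-temperature exception present in the relativistic case.
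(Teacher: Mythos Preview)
Your proposal is correct and follows essentially the same route as the paper's own proof: compute $\Real(\partial\lambda/\partial\theta_0)<0$ at $\theta_0=0$ for the imaginary-axis roots, check that setting $\lambda=iy$ in \eqref{eigen_heat_wave_sys} forces $\theta_0=0$ (so no root recrosses the imaginary axis), and then invoke the tail/summation argument from Theorem~\ref{Thmrel}. Your explicit verification that the real-part equation reduces to $c_h^2(c_s^2-u)=0$ is exactly the computation the paper summarizes, and your observation that this eliminates the critical-temperature exception is spot on.
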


\begin{proof}
Since
\bea
\Real \left( \frac{\partial \lambda}{\partial \theta_0} \right) = - \frac{n^2 \gamma^2 \kappa c_h^4}{2[c_h^4 c_s^2 + n^2\kappa^2 (c_h^2 - c_s^2)^2]} < 0,
\eea
the eigenvalues (which depend continuously on $\theta_0$) enter the half-plane $\Real(\lambda)<0$ as $\theta_0$ increases. If we set $\lambda=iy$ in equation~\eqref{eigen_heat_wave_sys}, with $y \in \bbR$, then we obtain from the imaginary part
\be
y = 0 \qquad \text{ or } \qquad y^2 = n^2(c_s^2+\gamma^2 \theta_0).
\ee
Now equation~\eqref{eigen_heat_wave_sys} is never satisfied in the first case, and is satisfied in the second case if and only if $\theta_0=0$. Therefore, the eigenvalues cannot leave the half-plane $\Real(\lambda)<0$ as $\theta_0$ increases. Consequently, the Fourier coefficients $a_n(t)$, $b_n(t)$ and $d_n(t)$ in \eqref{thetacos}-\eqref{qsin} decay exponentially for $\theta_0 > 0$, and the proof of Theorem~\ref{Thmrel} now applies.
\end{proof}
%
%
\section{Solving the classical system}\label{section7}
For comparison purposes, we will now use the Fourier decomposition \eqref{thetacos}-\eqref{qsin} to prove that the solutions of the classical system \eqref{classical1}-\eqref{classical3} with boundary conditions \eqref{bd_con} also decay in time. This has been done before in \cite{Hansen92}, but expanding about zero coupling parameter $\gamma$, instead of zero equilibrium temperature $\theta_0$. The Fourier modes are easily seen to satisfy the system of ordinary differential equations
\bea
\dot{a}_n(t)&=&-n^2\kappa a_n(t)-n\gamma\theta_0\dot{b}_n(t),\\
\ddot{b}_n(t)&=&-n^2c_s^2b_n(t)+n\gamma a_n(t),
\eea
for each $n \in \bbN$. We transform this into a first order system by setting
\bea
\dot{b}_n(t)&=&f_n(t),
\eea
whence
\bea
\lb{abcsyst_class}
\dot{a}_n(t)&=&-n^2\kappa a_n(t)-n\gamma\theta_0f_n(t),\\
\dot{f}_n(t)&=&n\gamma a_n(t)-n^2c_s^2b_n(t).
\eea
In matrix form, the system reads
\bea
\begin{bmatrix}
\dot{a}_n  \\ 
\dot{b}_n  \\
\dot{f}_n 
\end{bmatrix}
=
\begin{bmatrix}
-n^2\kappa & 0 &-n\gamma \theta_0 \\ 
0  & 0 & 1 \\
n\gamma &-n^2c_s^2&0
\end{bmatrix}
\begin{bmatrix}
{a}_n  \\ 
{b}_n  \\
{f}_n 
\end{bmatrix}.
\eea
The eigenvalues of the system's matrix are obtained by solving the polynomial equation
\bea \label{eigen_classical_sys}
\lambda^3 + n^2\kappa\lambda^2 + n^2 (c_s^2 + \gamma^2 \theta_0) \lambda + n^4\kappa c_s^2 = 0.
\eea
(which can be seen as the limit of \eqref{eigen_heat_wave_sys} as $c_h \to \infty$). It is interesting to note that for $\theta_0=0$ the eigenvalues are given by
\bea
\lambda= - n^2\kappa \qquad \text{ or } \qquad  \lambda=\pm in c_s.
\eea
These correspond to decaying modes, associated to the diffusion constant $\kappa$, and non-decaying oscillating modes, associated to the speed of sound $c_s$.

To obtain the first order correction in $\theta_0$ we differentiate equation~\eqref{eigen_heat_wave_sys} at $\theta_0=0$,
\bea
\left(3\lambda^2 + 2n^2\kappa\lambda + n^2 c_s^2\right) \frac{\partial \lambda}{\partial \theta_0} + n^2 \gamma^2 \lambda = 0.
\eea
For $\lambda=\pm in c_s$ we obtain
\bea
\left(- 2 n^2 c_s^2 \pm 2in^3\kappa c_s \right) \frac{\partial \lambda}{\partial \theta_0} \pm i n^3 c_s \gamma^2 = 0,
\eea
and so
\bea
\frac{\partial \lambda}{\partial \theta_0} = \frac{\mp i n \gamma^2}{- 2 c_s \pm 2in\kappa} = \frac{n \gamma^2 (-n \kappa \pm i c_s)}{2(c_s^2 + n^2\kappa^2)}.
\eea
This result can be obtained from equation \eqref{partial_lambda_classical}, as one might expect, in the limit $c_h \to \infty$. 

\begin{Thm}
All modes in the Fourier expansion \eqref{thetacos}-\eqref{qsin} for the solutions of the classical system \eqref{classical1}-\eqref{classical3} with boundary conditions \eqref{bd_con} decay exponentially in time for $\theta_0 > 0$. Consequently, given initial conditions $\hat{\theta}(0,x)$, $w(0,x)$, $\dot{w}(0,x)$ and $q(0,x)$ of class $C^1$, with $\int_0^\pi \hat{\theta}(0,x) dx = 0$, we have
\be
\lim_{t \to +\infty} \hat{\theta}(x, t) = \lim_{t \to +\infty} w(t,x) = \lim_{t \to +\infty} q(t,x) = 0
\ee
for $\theta_0 > 0$.
\end{Thm}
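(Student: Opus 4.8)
The plan is to mimic the proofs of Theorem~\ref{Thmrel} and of its counterpart for the classical system with heat waves. Substituting the Fourier ansatz \eqref{thetacos}--\eqref{qsin} into \eqref{classical1}--\eqref{classical3} (the initial datum $q(0,x)$ being determined by $\hat\theta(0,x)$ through \eqref{classical3}, so that $d_n(t) = n\kappa a_n(t)$) leaves the three-dimensional first order system \eqref{abcsyst_class} for $(a_n,b_n,f_n)$, whose characteristic polynomial is \eqref{eigen_classical_sys}. At $\theta_0 = 0$ its roots are $\lambda = -n^2\kappa$ and $\lambda = \pm i n c_s$, so it suffices to show that for every $\theta_0 > 0$ all three roots lie in the open half-plane $\Real(\lambda) < 0$; exponential decay of the Fourier coefficients, and then the limit statement, will follow exactly as in the previous sections. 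As there, this will be done in two steps: showing that the roots enter $\Real(\lambda) < 0$ as $\theta_0$ increases from $0$, and showing that for $\theta_0 > 0$ no root touches the imaginary axis, so that, by continuity of the roots in $\theta_0$, they cannot return to the closed right half-plane.

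For the first step, the mode $\lambda = -n^2\kappa$ already lies strictly in the left half-plane at $\theta_0 = 0$, hence remains there for small $\theta_0 > 0$ by continuity. For the oscillating modes $\lambda = \pm i n c_s$, the computation preceding the theorem gives
\be
\Real\left( \frac{\partial\lambda}{\partial\theta_0} \right) = -\frac{n^2\kappa\gamma^2}{2(c_s^2 + n^2\kappa^2)} < 0,
\ee
the sign being strict because $\gamma > 0$ by \eqref{con1}; hence these roots also move into the region $\Real(\lambda) < 0$.

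For the second step, set $\lambda = iy$ with $y \in \bbR$ in \eqref{eigen_classical_sys}. The imaginary part gives $y = 0$ or $y^2 = n^2(c_s^2 + \gamma^2\theta_0)$. In the first case the real part reduces to $n^4\kappa c_s^2 = 0$, which is impossible; in the second case the real part reduces to $n^4\kappa\gamma^2\theta_0 = 0$, which forces $\theta_0 = 0$ since $\gamma > 0$. Hence for $\theta_0 > 0$ the system matrix has no purely imaginary eigenvalue, and combining this with the first step all three eigenvalues have strictly negative real part for $\theta_0 > 0$. Therefore $a_n(t)$, $b_n(t)$, $f_n(t)$, and with them $d_n(t) = n\kappa a_n(t)$, decay exponentially for each $n$. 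The passage from this decay of the individual modes to the uniform limits $\hat\theta, w, q \to 0$ is then word for word the argument used at the end of the proof of Theorem~\ref{Thmrel}: split the series at a large index $N$, control the tail uniformly in $t$ using the absolute convergence of the Fourier series of the $C^1$ initial data together with $|e^{\lambda t}| \leq 1$ on $\Real(\lambda) \leq 0$, and control the remaining finitely many modes using their exponential decay.

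The computations being entirely explicit, the only point needing attention is the no-crossing argument of the second step; here it is in fact cleaner than for the relativistic system, since the real part on the locus $y^2 = n^2(c_s^2+\gamma^2\theta_0)$ equals $-n^4\kappa\gamma^2\theta_0$, which never vanishes for $\theta_0 > 0$, so that there is no analogue of the critical temperature and no exceptional value of $\theta_0$ appears in the statement.
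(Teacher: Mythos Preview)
Your proof is correct and follows exactly the same route as the paper: the derivative $\Real(\partial\lambda/\partial\theta_0)<0$ at the purely imaginary roots, the substitution $\lambda=iy$ into \eqref{eigen_classical_sys} to rule out crossings of the imaginary axis for $\theta_0>0$, and the appeal to the tail-splitting argument from Theorem~\ref{Thmrel}. If anything you are slightly more explicit than the paper, e.g.\ in spelling out that the real part on the locus $y^2=n^2(c_s^2+\gamma^2\theta_0)$ equals $-n^4\kappa\gamma^2\theta_0$ and in noting that $d_n=n\kappa a_n$ follows from \eqref{classical3}.
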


\begin{proof}
Since
\bea
\Real \left( \frac{\partial \lambda}{\partial \theta_0} \right) = - \frac{n^2 \gamma^2 \kappa}{2(c_s^2 + n^2\kappa^2)} < 0,
\eea
the eigenvalues (which depend continuously on $\theta_0$) enter the half-plane $\Real(\lambda)<0$ as $\theta_0$ increases. If we set $\lambda=iy$ in equation~\eqref{eigen_classical_sys}, with $y \in \bbR$, then we obtain from the imaginary part
\be
y = 0 \qquad \text{ or } \qquad y^2 = n^2(c_s^2+\gamma^2 \theta_0).
\ee
Now equation~\eqref{eigen_classical_sys} is never satisfied in the first case, and is satisfied in the second case if and only if $\theta_0=0$. Therefore, the eigenvalues cannot leave the half-plane $\Real(\lambda)<0$ as $\theta_0$ increases. Consequently, the Fourier coefficients $a_n(t)$, $b_n(t)$ and $d_n(t)$ in \eqref{thetacos}-\eqref{qsin} decay exponentially for $\theta_0 > 0$, and the proof of Theorem~\ref{Thmrel} now applies.
\end{proof}
%
%
\section{Conclusions}\label{section8}
In this paper we have obtained, for the first time, the linear hyperbolic system describing a relativistic heat-conducting rod, accurate to first order in the perturbations around an equilibrium state. We studied some general properties of this system, finding a decreasing energy integral and computing the characteristic propagation speeds. Interestingly, the speed of the mechanical waves is not exactly the nominal speed of sound, which applies to adiabatic processes, since our system features energy dissipation and heat flow. Finally, we proved that the solutions of the system with boundary conditions corresponding to thermal isolation and clamped endpoints decay in time, by means of a Fourier decomposition. Comparing with the classical systems (both with and without heat waves), we found that although the relativistic terms introduce some modifications the qualitative picture remains basically the same. There is an interesting nuance in that the relativistic system may fail to exhibit decay in time at a single critical temperature (for fixed values of the other parameters).
%
%
\section*{Acknowledgments}
This work was partially funded by FCT/Portugal through UID/MAT/04459/2013 and grant (GPSEinstein) PTDC/MAT-ANA/1275/2014.
AF was supported by the FCT scholarship SFRH/BPD/115959/2016.
%
%

\end{document}